\newtheorem{conjecture}[theorem]{Conjecture}
\DeclareSymbolFont{bbold}{U}{bbold}{m}{n}
\DeclareSymbolFontAlphabet{\mathbbold}{bbold}
\DeclareMathOperator{\Tr}{Tr}
\DeclareMathOperator{\Perm}{Perm}
\DeclareMathOperator{\poly}{poly}
\newcommand{\basis}[1]{\vec{\mathbf{e}_{{#1}}}}
\DeclareSymbolFont{extraitalic}      {U}{zavm}{m}{it}
\DeclareMathSymbol{\stigma}{\mathord}{extraitalic}{168}
\DeclareMathSymbol{\Stigma}{\mathord}{extraitalic}{167}
\title{Inapproximability of Positive Semidefinite Permanents and Quantum State Tomography}
	\author{Alex Meiburg \\ \footnotesize{ameiburg@ucsb.edu, University of California, Santa Barbara}}
	\date{}
	\author{\textit{\small{[Redacted Author Name]}}}
	\date{}
\begin{document}
\maketitle

\begin{abstract}
Matrix permanents are hard to compute or even estimate in general. It had been previously suggested that the permanents of Positive Semidefinite (PSD) matrices may have efficient approximations. By relating PSD permanents to a task in quantum state tomography, we show that PSD permanents are NP-hard to approximate within a constant factor, and so admit no FPTAS (unless P=NP). We also establish that several natural tasks in quantum state tomography, even approximately, are NP-hard in the dimension of the Hilbert space. These state tomography tasks therefore remain hard even with only logarithmically few qubits.
\end{abstract}

\section{Introduction}
\subsection{Background}
The permanent is a classical problem of intense interest in the study of counting problems. For a matrix $M \in \mathbb{C}^{n\times n}$, the permanent is defined as
\begin{equation}\label{eqn:permDef}
\Perm(A)=\sum_{\sigma\in S_n}\prod_{i=1}^n a_{i,\sigma(i)}
\end{equation}
summing over all products of permutations of rows and columns. While directly evaluating the expression in Eq \ref{eqn:permDef} takes $O(n!)$ time, Ryser's formula\cite{Ryser63} gives an $O(2^n n)$ time algorithm. Valiant showed in 1989 that computing the permanent exactly is \#P-hard, even for 0-1 matrices\cite{VALIANT1979189,bendor}. However, it is amenable to efficient approximation in particular settings. In 2001, Jerrum, Sinclair and Vigoda\cite{jerrum04} gave a fully-polynomial randomized approximation scheme (FPRAS) for permanents of nonnegative matrices. In 2002, Gurvits and Samorodnitsky\cite{Gurvits2002} gave a polynomial time $e^n$ multiplicative approximation to PSD mixed discriminants, which included permanents of nonnegative matrices as a special case.
\par When the matrix is Hermitian positive semidefinite (HPSD, or if purely real, PSD), the permanent is necessarily nonnegative, and this offers hope of efficient multiplicative approximation. HPSD permanents are of particular interest to the quantum information community - for reasons unrelated to quantum state tomography, but rather related to thermal BosonSampling experiments\cite{Tamma2014,Keshari2015,Kim2020}. It is known that by Stockmeyer counting\cite{Grier18,Keshari2015,Stockmeyer83} computing multiplicative approximations to PSD permanents is contained in $\textsf{FBPP}^\textsf{NP}$. In 1963, Marcus\cite{Marcus63} observed that the product of the diagonal of a PSD matrix immediately gives an $n!$ approximation ratio to the permanent. In 2017, \cite{Anari2017} gave a polytime approximation to PSD permanents within a ratio of $c^n$ with $c = e^{1+\gamma} \approx 4.85$. \cite{Yuan2021} described a similar approach with the same approximation ratio. \cite{Chakhmakhchyan2017} and \cite{barvinok2020remark} gave algorithms for approxmation when the spectrum of the matrix is small in radius, that is, when $\lambda_{min}/\lambda_{max}$ is not too small.

\subsection{Main Results}
Our main result is to show that there is no efficient approximation of PSD permanents. This can be stated as the absence of a fully-polynomial time approximation scheme (FPTAS) or fully-polynomial randomized approximation scheme (FPRAS).
\begin{corollary}[{\textnormal{\em of Thm \ref{thm:npApproxPerm}}}]
There is no FPTAS for HPSD permanents unless \textnormal{\textsf{P}=\textsf{NP}}, and there is no FPRAS for HPSD permanents unless \textnormal{\textsf{RP}=\textsf{NP}}.
\end{corollary}
More precisely, we show that it is \textsf{NP}-hard to approximate within a particular subexponential factor.
\begin{theorem}[{\textnormal{\em Thm \ref{thm:npApproxPerm}, restated}}]
For any constant $\epsilon > 0$, it is \textsf{NP}-hard to approximate the permanent of $n\times n$ HPSD matrices within a factor of $2^{n^{1-\epsilon}}$.
\end{theorem}
In Section \ref{sec:realMats}, we show that these theorems also hold for (purely real) PSD matrices.
\par Our work provides a lower bound on the difficulty of approximating PSD permanents, that almost matches known upper bounds. The algorithm of \cite{Anari2017} shows that the singly exponential approximation ratio $4.85^n$ is possible within polynomial time, while we show that subexponential approximation ratio $2^{n^{1-\epsilon}}$ is intractable. This primarily leaves the question whether $(1+\epsilon)^n$ is polynomial-time computable for any $\epsilon > 0$. The algorithms of \cite{Chakhmakhchyan2017} and \cite{barvinok2020remark} fail on the hard instances that we construct: the matrices we construct are highly rank deficient, and therefore have $\lambda_{min}=0$.
\par We arrived at our hard instances via a problem in quantum state tomography. If a matrix $M$ is positive semidefinite, then it has a matrix square root $VV^\dagger = M$, and we show that
$$\Perm(M)=\frac{(d+n-1)!}{2\pi^n} \int_{\mathbf{v} \in \mathbb{C}^n,\,|v|=1} \prod_{i} |\mathbf{v}\cdot V_i|^2.$$
This last expression occurs naturally in the context of tomography, where the rows $V_i$ of $V$ correspond to an observation history. We analyze the problem by first establishing a concentrating construction (Lemmas \ref{Ulemma} and \ref{inUlemma}). When $V_i$ contains many copies of basis vectors and vectors of the form $\frac{\basis j\pm i\basis k}{\sqrt 2}$, the integral concentrates at the points (up to a phase) of an appropriately scaled hypercube:
$$\int_{\mathbf{v} \in \mathbb{C}^n,\,|v|=1} \prod_{i} |\mathbf{v}\cdot V_i|^2  \approx C \sum_{v \in \{-1,+1\}^d} \prod_{i} |\mathbf{v}\cdot V_i|^2 $$
for some constant $C$ that depends only on $d$ and $n$. This concentration will let us relate permanents to combinatorial problems (Lemma \ref{Glemma}), specifically counting solutions to \textsf{Not-All-Equal-3SAT}, and ultimately let us prove hardness.
\par The connection to quantum state tomography means we also get results about the hardness of estimating quantum states given measurements. For a quantum system with Hilbert space dimension $n$ and $\poly(n)$ observations, the {\em maximum pure state likelihood} is the highest likelihood of those observations attainable over any pure state $\ket{\psi}$.

\begin{theorem}[{\textnormal{\em Thm \ref{thm:MLEhard}, informal}}]
For any constant $\epsilon > 0$, the following task is \textsf{NP}-complete: given a series of quantum observations, find a pure state with likelihood a factor of $2^{n^{1-\epsilon}}$ of the maximum pure state likelihood.
\end{theorem}

In other words, there is no FPRAS for maximum likelihood estimation (MLE) quantum state tomography unless \textsf{RP}=\textsf{NP}. We have similar statements about the \textsf{NP}-hardness of computing the Bayesian average state and Bayesian average observables (Theorem \ref{thm:stateHard}). These results are unusual in that they imply exponential difficulty in dimension $n$ in the Hilbert space $\mathbb{C}^n$. Most quantum problems are only considered tractable if they have efficient algorithms in the number of particles $q = \log(n)$, and have trivially polynomial solutions in $n$; whereas we show that (assuming ETH\cite{Impagliazzo1999}) quantum state tomography takes time exponential in $n$.
\par We stress that although our work has connections to quantum information through BosonSampling and tomography, our discussion of complexity is focused on classical computers. The \textsf{NP}-hardness are statements about classical hardness, and the algorithm described in section \ref{sec:fixDAlgo} for tomography in fixed dimension is a polynomial time {\em classical} algorithm. Unless \textsf{NP}$\subseteq$\textsf{BQP} however, our results rule efficient permanent computations on quantum computers as well.

\section{Key ideas of the proof}
We start with a lemma relating symmetric, multilinear functions to permanents. Similar lemmas have appeared in \cite{barvinok16book,barvinok2020remark}, and they can broadly be viewed as alternate forms of Wick's Theorem \cite{ZVONKIN1997281}. 
\begin{lemma}\label{symLemma}\normalfont 
Suppose $f : (\mathbb{C}^d)^{2n} \to \mathbb{R}$ is a function of $2n$ vectors of dimension $d$, with the properties:
\begin{itemize}
\item Multilinear in its first $n$ arguments:
$$f(v_1,\dots,\alpha v_i + \beta v_i',\dots ) = \alpha f(v_1,\dots, v_i,\dots) + \beta f(v_1,\dots,v_i',\dots)$$
\item Conjugate multilinear in its latter $n$ arguments:
$$f(v_1,\dots,\alpha v_{n+i} + \beta v_{n+i}',\dots) = \alpha^* f(v_1,\dots, v_{n+i},\dots) + \beta^* f(v_1,\dots, v_{n+i}',\dots)$$
\item Symmetric in its first $n$ arguments, and its latter $n$ arguments:
$$f(v_1,v_2,\dots; v_n,v_{n+1},\dots) = f(v_{\sigma(1)},v_{\sigma(2)},\dots, v_{\tau(n)}, v_{\tau(n+1)})$$
\item Invariant under unitary change of basis: for any unitary $U \in \mathbb{C}^{d\times d}$,
$$f(v_1,v_2,\dots; v_{n}, v_{n+1},\dots) = f(Uv_1,Uv_2,\dots; U v_n, U v_{n+1} )$$
\end{itemize}
Then $f$ is determined up to an overall constant $C$ by the formula,
\begin{equation}\label{fSym}
f(v_1,\dots; v_{n},\dots) = C \Perm(A_{ij}),\quad\textrm{ where } A_{ij} = v_i \cdot v_j^*
\end{equation}
and the constant $C$ can be determined by
\begin{equation}\label{cFormula}
C = \frac{f(\basis1,\basis1,\basis1,\dots)}{n!}
\end{equation}
where $\basis1$ is the unit basis vector in the first coordinate.
\end{lemma}
\begin{proof}
Because $f$ is invariant under a unitary change of basis, $f$ can only depend on its inputs through inner products of vectors, $\langle v_i,v_j\rangle$. Since $f$ is multilinear, it can be written as a sum of terms $t_k$, where each $t_k$ is a product of terms from the vectors. The separate linearity and conjugate linearity means that we can only have inner products of covariant (first $n$) and contravariant (latter $n$) vectors. This means every term in the sum must be some product of the form $\prod_{i \in [n]} v_i \cdot v_{n+\sigma(i)}^*$ for some permutation of $n$. Then by symmetry of the arguments, all pairs must occur in the same relation to either, so all pairings must occur equally. This leaves only a single form, the result above.
\par Computing $C$ can be found by substituting in $\basis1$ in \ref{fSym} so that all dot products become 1. The permanent of the all-1's matrix is just $n!$, so this becomes the normalizing factor.
$\square$
\end{proof}
This lets us relate the permanent to a particular integral over unit-norm complex vectors:
\begin{theorem}\label{thm:permIsSphere}
For any $L, R\in\mathbb{C}^{d\times n}$ be complex matrices, denoting the $k$th row as $L_k$,
$$\int_{\mathbf{x} \in \mathbb{C}^n,\,|x|=1} \left(\prod_{k} \mathbf{x}^\dagger L_k\right)\left(\prod_{k} R_k^\dagger \mathbf{x} \right) = \frac{2\pi^n}{(d+n-1)!} \Perm(LR^\dagger)$$
Note that when $L=R$, the product in the integral becomes $\prod_k |\langle L_k, \mathbf{v}\rangle|^2$, and the product $M = L L^\dagger$ is PSD.
\end{theorem}
\begin{proof}
Viewing the left side as a function $f$ of the $n$ rows of $L$ and $R$, we can see that it satisfies all the hypotheses of Lemma \ref{symLemma}. It is linear in each row of $L$, conjugate linear in each row of $R$, and symmetric under permuting the rows of $L$ or the rows of $R$. It is also invariant under a unitary change of basis:
$$f(UL,UR) = \int_{\mathbf{x} \in \mathbb{C}^n,\,|x|=1} \prod_{k} \mathbf{x}^\dagger (UL_k)\,\,\prod_{k} (UR_k)^\dagger \mathbf{x} = \int_{\mathbf{x} \in \mathbb{C}^n,\,|x|=1} \prod_{k} (U^\dagger \mathbf{x})^\dagger L_k\,\,\prod_{k} R_k^\dagger (U^\dagger \mathbf{x})$$
$$ = \int_{\mathbf{u} \in \mathbb{C}^n,\,|u|=1} \prod_{k} \mathbf{u}^\dagger L_k\prod_{k} R_k^\dagger \mathbf{u} = f(L,R)$$
so that we've used the symmetry of the unit sphere in $\mathbb{C}^n$ to remove the unitary via $\mathbf{u} = U\mathbf{x}$. Setting each $L_k = R_k = \basis1$, the spherical integral can be computed with standard formulae (e.g. \cite{Folland2001}) to find the normalizing constant
$$C = \frac{1}{n!}\int_{\mathbf{x} \in \mathbb{C}^n,\,|x|=1} (\mathbf{x}^\dagger \basis1)^n(\basis1^\dagger \mathbf{x})^n = \frac{1}{n!}\cdot\frac{2pi^dn!}{(d+n-1)!} = \frac{2\pi^n}{(d+n-1)!}.$$
$\square$
\end{proof}
This formula is similar to another well-known expression for the permanent involving Gaussian integrals, and can be understood as a version of Wick's theorem. \cite{barvinok16book,ZVONKIN1997281}

\subsection{Outline of the proof}
Before diving into the proof of hardness itself, we aim to provide some intuition of the construction. We focus on the integral $F = \int_{\mathbf{x}} \prod_k |\langle V_k, \mathbf{x}\rangle|^2$ over a the sphere of unit (complex) vectors, and build up a set of vectors $V$ with desirable properties. The proof will involve gradually adding vectors to a list $V_k$, in turn modifying the integrand $I_V(\mathbf{x}) = \prod_k |\langle V_k, \mathbf{x}\rangle|^2$. This integrand $I_V(\mathbf{x})$ is nonnegative, so there cannot be any cancellation in the integral. Our goal will be only showing that certain regions have exponentially small magnitude, so that only particular regions with appreciable contribution remain, and they are primarily responsible for the overall value of $F$. Then, the magnitude of $F$ will be used to understand the value of $I$ on those particular regions, where large values of $F$ indicate solutions to an \textsf{NP}-hard problem. And since $F$ can be computed by a HPSD permanent, computing that permanent must be hard as well.

\par How are we to choose the $V$ in order to make an interesting function $I_V$? Each vector $V_k$ introduces zeroes on the sphere at all vectors orthogonal to $V_k$. All points approximately orthogonal to $V_k$ will have a very small magnitude, and so contribute very little to the integral. We will start our collection of vectors includes many copies of each standard basis vector $\basis k$. This creates high-degree zeros along each of $d$ distinct perpendicular directions, slicing the sphere so that the only regions with appreciable magnitude form the corners of a cube.

\begin{figure}[H]
\begin{center}
\includegraphics[width=2in]{./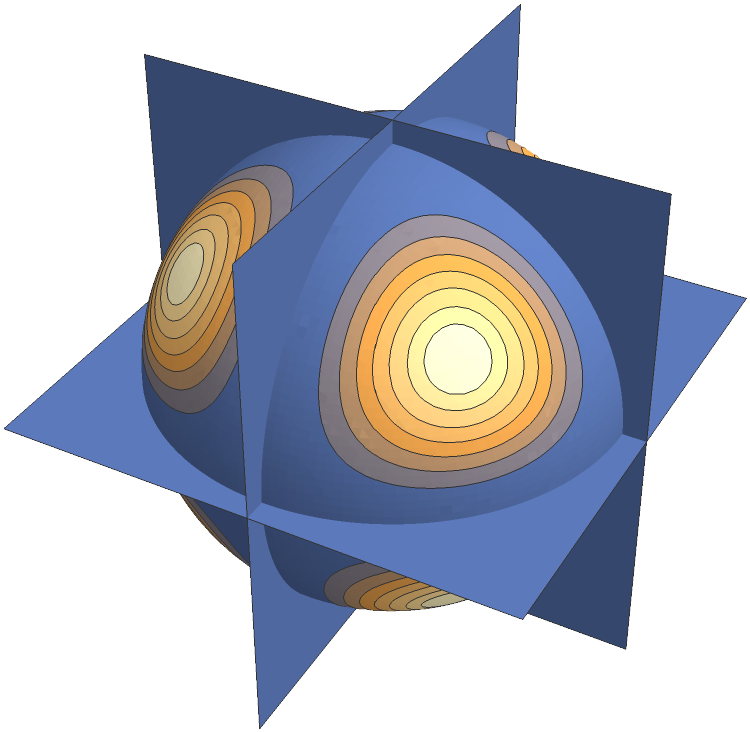}
\end{center}
\caption{Schematic of how we can create ``corners" on the sphere by repeatedly cutting with planes. Blue represents lower magnitude. This shows only purely real $\mathbf{x}$.
\label{sphereCuts}}
\end{figure}

After adding one copy of each basis vector $\basis k$, the magnitude at a given point $\mathbf{x}=(\alpha_1,\dots \alpha_d)$ is the product of the absolute values of its entries in that basis: $I_V(\mathbf{x}) = \prod_j |\alpha_j|$. This is maximized when $|\alpha_j| = |\alpha_k|=\frac{1}{\sqrt{d}}$ for all $j$, $k$. If we then subsequently add several vectors of the form $\frac{\basis j+i\basis k}{\sqrt{2}}$ and $\frac{\basis j-i\basis k}{\sqrt{2}}$, together these rule out a purely imaginary phase between the $j$ and $k$ components, so that the maxima are at $\frac{\basis j\pm \basis k}{\sqrt{2}}$. After adding these two for each $j\neq k$, $I(\mathbf{x})$ will peak near $\mathbf{x} = \frac{e^{i\theta}}{\sqrt{d}}(1,\pm 1,\pm 1\dots)$. Up to an overall phase of $\mathbf{x}$, we've focused $I$ to a set of $2^{d-1}$ distinct points. These $2^{d-1}$ circles of ``binarized" vectors form a set $B_0$. To get this, we had to put $d + 2{d \choose 2} = d^2$ vectors into $V_k$. By analogy with quantum information, we will refer to these as the $Z$ vectors and $Y$ vectors respectively. Together, this set of $d^2$ vectors will form one ``basic set" -- ``basic" in the set of ``enforcing the basis".

Once we have our basic vectors to concentrate $I$ at these binarized points $B_0$, we want to add vectors that will penalize some of these $2^{d-1}$ points, so that finding the optimum becomes a search problem over exponentially many points. Our functional $I$ is only sensitive to the relative phase between components of a vector, and not to the signs of the components themselves. This leads us most naturally to the problem of Not-All-Equal 3-Satisfiability, or \textsf{NAE3SAT}.\cite{Schaefer78} So now consider the impact of adding a triple of ``clause vectors",
$$\mathbf{v}_1 = \frac{\basis 1+\basis 2-2\basis 3}{\sqrt{6}}$$
$$\mathbf{v}_2 = \frac{\basis 1-2\basis 2+\basis 3}{\sqrt{6}}$$
$$\mathbf{v}_3 = \frac{-2\basis 1+\basis 2+\basis 3}{\sqrt{6}}.$$
Each is orthogonal to $\frac{1}{\sqrt{3}}(\basis 1+\basis 2+\basis 2)$, in which all the relative signs are positive (or equivalently, all negative). We call this collection of three vectors a ``clause set''. This effectively rules out the possibility of all signs being the same. There are three not-all-equal points (up to phase):
$$\mathbf{p}_1 = \frac{1}{\sqrt{3}}(\basis 1+\basis 2-\basis 3)$$
$$\mathbf{p}_2 = \frac{1}{\sqrt{3}}(\basis 1-\basis 2+\basis 3)$$
$$\mathbf{p}_3 = \frac{1}{\sqrt{3}}(-\basis 1+\basis 2+\basis 3)$$
The three $\mathbf{p}_i$ all have the same squared inner products with the set of $\mathbf{v}_i$, those being $\{\frac 89,\,\frac 2{9},\,\frac 2{9}\}$ in some order, and so all $\mathbf{p}_i$ have an equal $I_V(\mathbf{p}_i) = \frac{32}{729}$.
\par The total effect may be visualized in the following plot:
\begin{figure}[H]
\begin{center}
\includegraphics[width=7in]{./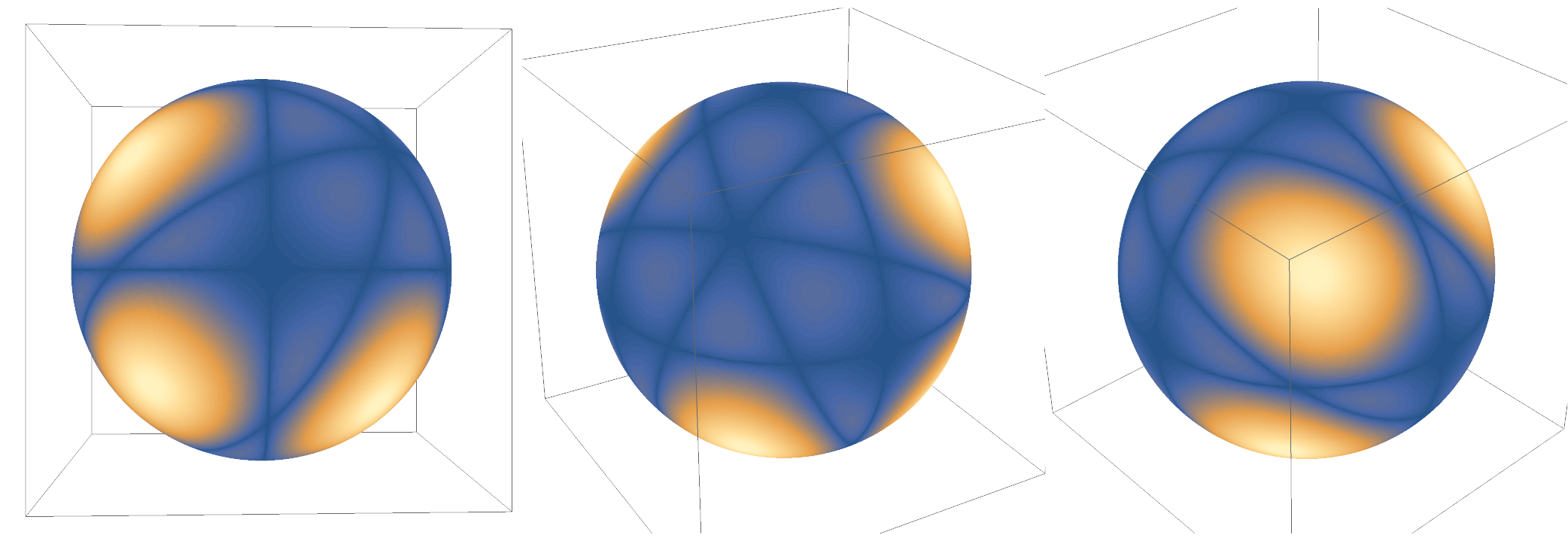}
\end{center}
\label{q3oneclause}
\caption{Three plots of $I_V(\mathbf{x})$. Only real points are plotted, smaller values are blue. As the integrand only depends on points up to an overall phase, all points appear effectively doubled, as $I_V(\mathbf{x}) = I_V(-\mathbf{x})$. There are twelve vectors in $V$. Nine come from a basic set: $\basis 1$, $\basis 2$, $\basis 3$, and $(\basis j\pm i\basis k)/\sqrt{2}$ in six permutations (given by $j,k\in\{1,2,3\}$, $j<k$). The right-angled crosses are due to the first three vectors, dividing the space into eight corners (the $Z$ vectors). The last three vectors in $V$ are a clause set: $(\basis j+\basis k-2\basis \ell)/\sqrt{6}$ (in all 3 permutations), creating the 6-way intersection shown in the second diagram, eliminating two opposing corners of the eight.}
\end{figure}

Here we look at $d=3$ with one basic set and one clause set. The plot doesn't display states with complex coefficients, but it can be verified that the maxima have all real phases. Without the clause vectors, there would be eight high amplitude points. The first subplot shows the effect of the clause most directly: out of the four points (up to sign) in $B_0$, one of them -- the top right corner -- has been eliminated. The excluded option has three zero planes running through it.

By adding appropriate clause sets, the only remaining points with large values will be those satisfying an \textsf{NAE3SAT} problem, which is \textsf{NP}-hard. The other points will be too small to contribute to the integral, so that evaluating the integral tells us about the satisfiability of the \textsf{NAE3SAT} problem. With the outline complete, we now begin the steps of the proof, starting with the concentration.

\section{Proof of Hardness}
\subsection{Concetration}
After one basic set, each point in $B_0$ has a value $I_V(\mathbf{x})$ of $1/d^{d^2}$ (by direct calculation). We would like to show that any state far away from $B_0$ has a significantly lower value. For this reason, and with the intuition that the integrand $I_V$ represents likelihood values, we talk about {\em relative} values of $I_V$. By the value of $I_V(a)$ relative to $I_V(b)$, we simply mean $I_V(a)/I_V(b)$.
\par Any unit vector $\mathbf{x} \in \mathbb{C}^d$ can be written as
$$\mathbf{x} = \frac{e^{i\Theta}}{\sqrt{d}} \sum_{k=1}^d \sqrt{\alpha_k} e^{i\pi(\theta_k+n_k)} \vec{\mathbf{e}_k}$$
where $\Theta$, $\alpha_k$ and $\theta_k$ are all real, $\alpha_k\ge 0$, $\sum_k \alpha_k = d$, $\theta_1=0$, and all $\Theta$, $\theta_k \in [-1/2,1/2])$, and $n_k \in \{0,1\}$. The $\vec \alpha$, $\vec \theta$, and $\vec n$ respectively indicate the amplitudes, phases relative to the first component, and signs of the real part. This polar representation is unique except for when one of the $\alpha_0 = 0$, which is a measure-zero set. Accordingly, we can neglect this measure zero set in subsequent discussions of the integral $\int I_V(x)$ -- as we could otherwise arbitrarily set $I_V(x) = 0$ there without modifying the integral.
\begin{lemma}\label{Ulemma}
Let $\mathbf{x}$ be a unit vector with polar representation $\Theta$, $\vec \alpha$, $\vec\theta$, and $\vec n$. Let $\epsilon_\alpha$ be the distance of $\vec\alpha=(\alpha_1,\dots \alpha_d)$ from $\vec 1$. Then when $V$ is one basic set, the value of $I_V(\mathbf{x})$ relative to any point in $B_0$, is at most $1-\frac{\epsilon_\alpha^2}{4d}$. If $\epsilon_\alpha \le 1/2$, then the likelihood is also at most $1-3\theta_i^2$ for all components $\theta_i$ of $\vec \theta$.
\end{lemma}
\begin{proof}
Then $B_0$ consists of the points with $\alpha_k = 1$ and $\theta_k$ is an integer. If $\mathbf{x}$ has significant distance from all elements of $B_0$, then either the amplitudes $\alpha_k$ or phases $\theta_k$ must differ significantly from these conditions. The likelihood after of the measurements is
$$L(\psi) = \left(\prod_k \left|\sqrt{\frac{\alpha_k}{d}}\right|^2 \right)\left(\prod_{j\le k}\left|\frac{
\sqrt{\alpha_j} e^{i\pi(\theta_j+n_j)}+i\sqrt{\alpha_k} e^{i\pi(\theta_k+n_k)}
 }{\sqrt{2d}}\right|^2\, \left|\frac{
\sqrt{\alpha_j} e^{i\pi(\theta_j+n_j)}-i\sqrt{\alpha_k} e^{i\pi(\theta_k+n_k)}
 }{\sqrt{2d}}\right|^2\right)$$
$$= \left(\prod_k \frac{\alpha_k}{d^d} \right)\left(\prod_{j\le k} \frac{\alpha_j^2 + \alpha_k^2 + 2 \alpha_j \alpha_k \cos(2\pi (\theta_j - \theta_k + n_j - n_k))}{4d^2}\right)$$
$$= \frac{1}{d^d(2d)^{d^2-d}} \left(\prod_k \alpha_k \right)\left(\prod_{j\le k} \alpha_j^2 + \alpha_k^2 + 2 \alpha_j \alpha_k \cos(2\pi (\theta_j - \theta_k))\right)$$
The first factor coming from the $Z$ vectors $\basis k$ in the basic set, and the last two factors coming from the $Y$ vectors $\frac{\basis j\pm i\basis k}{\sqrt 2}$, for each $j < k$, in the basic set.

\par The first step is to bound the likelihood in terms of the magnitudes $\alpha_k$. Looking at the effect of the $Z$ vectors, $\prod_k^d \alpha_k$, we have a convex function on the standard $(d-1)$-simplex $\sum \alpha_k = d$. It is clearly maximized at $\vec\alpha_{opt} = (1,1,1,\dots 1)$, where it evaluates to 1. Suppose that our state $\ket{\psi}$ has an associated $\alpha$-vector, $\vec{\alpha} = (\alpha_1,\dots \alpha_d)$ is a distance at least $\epsilon_\alpha$ away from $\vec{\alpha}_{opt}$, and that $\epsilon_\alpha \le 1$. Then one of the coordinates must be at least $\epsilon_\alpha/\sqrt{d}$ away from 1. With generality, let this coordinate be $\alpha_1$. If $\alpha_1 \le 1 - \epsilon_\alpha/\sqrt{d}$, then the greatest the likelihood could still be is when the other $\alpha_k$ are all equal at $1 + \epsilon_\alpha/\sqrt{d}(d-1)$. Multiplying these together, the resulting likelihood is upper-bounded by $1-\frac{\epsilon_\alpha^2}{2(d-1)}$. If $\alpha_1$ has instead been increased so that $\alpha_1 \ge 1 + \epsilon_\alpha/\sqrt{d}$, then the likelihood is maximized when the other $\alpha_k$ are all equal at $1 - \epsilon_\alpha/\sqrt{d}(d-1)$. Multiplying these together, the resulting likelihood is upper-bounded by $1-\frac{\epsilon_\alpha^2}{4d}$. Since the latter of these bounds is looser, we see that any state whose $\vec \alpha$ is at least $\epsilon_\alpha$ away from the all-ones vector has a likelihood at most $1 - \frac{\epsilon^2}{4d}$ in these measurements.
\par This gives bounds on the $Z$ vectors' contribution to the likelihood. To keep this bound when the $Y$ vectors are added, we need to check that they are also maximized at $\vec\alpha = \vec 1$. Each factor 
$$\prod_{j\le k} \alpha_j^2 + \alpha_k^2 + 2 \alpha_j \alpha_k \cos(2\pi (\theta_j - \theta_k))$$
is maximized when $\theta_j - \theta_k$ is an integer, at which point it becomes $\prod_{j\le k} (\alpha_j+\alpha_k)^2 = \left(\prod \alpha_j + \alpha_k\right)^2$. This is in turn globally maximized by $\alpha_j = \alpha_k = 1$, so the error bound on $\vec \alpha$ holds.
\par The next step is to bound the likelihood in terms of the $\vec\theta$. We only care about the degree to which $\theta_i - \theta_j$ is not an integer, let $r_{ij} = \theta_i - \theta_j$ to the nearest integer, so $r_{ij} \in [-1/2,1/2]$. Given that $\cos(2\pi r) \le 1 - 8r^2$ for all $r\in[-1/2,1/2]$, we have a relative likelihood of
$$\frac{I_V(\mathbf x)}{I_V(B_0)} = \frac{\alpha_j^2 + \alpha_k^2 + 2 \alpha_j \alpha_k \cos(2\pi r_{jk})}{\alpha_j^2 + \alpha_k^2 + 2 \alpha_j \alpha_k} \le \frac{\alpha_j^2 + \alpha_k^2 + 2 \alpha_j \alpha_k (1-8r_{jk}^2)}{\alpha_j^2 + \alpha_k^2 + 2 \alpha_j \alpha_k} = 1 - \frac{16\alpha_j\alpha_k}{(\alpha_j+\alpha_k)^2}r_{jk}^2$$
Let's assume that each $\alpha_j$ is in the interval $[1/2,3/2]$ -- which is implied by them being sufficiently close to the all-ones vector, that is, $\epsilon_\alpha \le 1/2$. Then the expression $\frac{16\alpha_j\alpha_k}{(\alpha_j+\alpha_k)^2}$ is at least 3, so
$$\frac{I_V(\mathbf x)}{I_V(B_0)} \le 1 - 3r_{jk}^2$$
which tells us that every phase $\theta_i$ should be close to $0$ for $I_V$ to be large, or else suffer a $1-3r^2$ penalty in the likelihood. $\square$
\end{proof}
Later we will also need {\em lower} bounds on the likelihood, if we {\em are} in $U$.
\begin{lemma}\label{inUlemma}
If a state $\mathbf{x}$ is within distance $\epsilon \le 0.1$ of some point $b$ in $B_0$, and $V$ is one set of basic vectors, then $\mathbf{x}$ has likelihood at least
$$I_V(\mathbf{x}) \ge \frac{1-2\epsilon d^{5/2}}{d^{d^2}}.$$
\end{lemma}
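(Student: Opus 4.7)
The plan is to factor the likelihood over the basic observations $\phi_i$ as $L(\psi) = \prod_i |\braket{\phi_i|\psi}|^2$ and, in each factor, pull out the value at $b$. Writing $\psi = b + \delta$ with $\|\delta\| \le \epsilon$, this gives $L(\psi)/L(b) = \prod_i |1 + r_i|^2$, where $r_i := \braket{\phi_i|\delta}/\braket{\phi_i|b}$. Since $L(b) = 1/d^{d^2}$ by the computation at the start of Section 5.3, the target bound reduces to showing $\prod_i |1+r_i|^2 \ge 1 - 2\epsilon d^{5/2}$.

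The key structural fact is that $|\braket{\phi_i|b}| = 1/\sqrt{d}$ uniformly across all basic observations $\phi_i$, for any $b \in B_0$. Writing $b = \frac{1}{\sqrt{d}}\sum_k s_k \ket{k}$ with signs $s_k \in \{\pm 1\}$, the $Z$-observations $\phi_i = \ket{k}$ give $|\braket{k|b}| = 1/\sqrt{d}$, and the $Y$-observations $\phi_i = (\ket{j} \pm i\ket{k})/\sqrt{2}$ give $|\braket{\phi_i|b}| = |s_j \mp i s_k|/\sqrt{2d} = \sqrt{2}/\sqrt{2d} = 1/\sqrt{d}$, independently of the sign pattern of $b$. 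Combined with the Cauchy--Schwarz estimate $|\braket{\phi_i|\delta}| \le \|\delta\| \le \epsilon$, this yields the uniform bound $|r_i| \le \epsilon\sqrt{d}$.

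To assemble the estimate, apply $|1+r|^2 \ge (1-|r|)^2 \ge 1 - 2|r|$ together with the elementary product-to-sum inequality $\prod_i (1 - x_i) \ge 1 - \sum_i x_i$ (valid when each $x_i \in [0,1]$) to obtain $\prod_i |1+r_i|^2 \ge 1 - 2\sum_i |r_i| \ge 1 - 2 d^2 \cdot \epsilon\sqrt{d} = 1 - 2\epsilon d^{5/2}$, using the count of $d + d(d-1) = d^2$ basic observations. The only regime where some $2|r_i|$ could exceed $1$ and break the product-to-sum step is when $2\epsilon d^{5/2} > 1$; but then the claimed right-hand side is non-positive and the bound holds trivially from $L(\psi) \ge 0$. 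The main step is recognising the uniform value $|\braket{\phi_i|b}| = 1/\sqrt{d}$; once that is in hand, the rest is elementary arithmetic, and none of the geometry of the sphere or the detailed $\alpha_k,\theta_k$ parametrisation of Lemma~\ref{Ulemma} is needed.
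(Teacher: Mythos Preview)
Your proof is correct and takes a genuinely different route from the paper's. The paper works in the $(\alpha_k,\theta_k)$ parametrisation, bounds the coordinate-wise deviations of the amplitudes and phases from those of $b$, substitutes these into the explicit likelihood formula, and then simplifies through a chain of inequalities involving $(1-\epsilon\sqrt{d})^{(d^2+d)/2}$ and $\cos(2\pi(\theta_j-\theta_k))$ terms. You instead observe the single structural fact that $|\braket{\phi_i|b}| = 1/\sqrt{d}$ for \emph{every} basic observation $\phi_i$, which immediately yields the uniform bound $|r_i|\le \epsilon\sqrt d$ and reduces the whole lemma to a one-line product-to-sum estimate over the $d^2$ factors. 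Your handling of the edge case $2\epsilon\sqrt{d}>1$ (where the product-to-sum step could fail) is also clean: it forces $2\epsilon d^{5/2}>1$, making the claimed lower bound nonpositive and hence trivially true. The paper's approach has the minor bookkeeping advantage of reusing the same parametrisation as Lemma~\ref{Ulemma}, but your argument is shorter, self-contained, and avoids the somewhat delicate $\alpha_k$-versus-$\sqrt{\alpha_k}$ accounting that the paper's proof goes through.
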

or in terms of the relative value, $I_V(\mathbf{x})/I_V(B_0) \ge 1-2\epsilon d^{5/2}$.
\begin{proof}
We will again use polar representation for $\mathbf{x}$:
$$I_V(\mathbf{x}) = \frac{1}{d^d(2d)^{d^2-d}} \left(\prod_k \alpha_k \right)\left(\prod_{j\le k} \alpha_j^2 + \alpha_k^2 + 2 \alpha_j \alpha_k \cos(2\pi (\theta_j - \theta_k))\right)$$
If our point $\mathbf{x}$ is within distance $\epsilon < 1$ of $B_0$, then each of the $\alpha_i$ must individually be within $\epsilon \sqrt{d}$ of 1, and each $\theta_i$ satisfies
$$\cos(\pi\theta_i) > \sqrt{1-\epsilon^2} \, \implies \, |\theta_i| < \epsilon/2$$
and so
$$\cos(2\pi(\theta_j-\theta_k)) \ge 1 - \frac{(2\pi(\theta_j-\theta_k))^2}{2} \ge 1 - \frac{(2\pi\epsilon)^2}{2}.$$ 
Then the likelihood is bounded by,
$$L(\psi) \ge \frac{1}{d^d(2d)^{d^2-d}} \left(\prod_k 1-\epsilon \sqrt{d} \right)\left(\prod_{j\le k} (1-\epsilon \sqrt{d})^2 + (1-\epsilon \sqrt{d})^2 + 2 (1-\epsilon \sqrt{d})^2 \left(1 - \frac{(2\pi\epsilon)^2}{2}\right)\right)$$
$$ = \frac{1}{d^d(2d)^{d^2-d}}(1-\epsilon \sqrt{d})^{(d^2+d)/2}\left(4 - 4\pi^2\epsilon^2\right)^{(d^2-d)/2}$$
$$ \ge \frac{1}{d^{d^2}}\left(1-\epsilon\sqrt{d}(d^2+d)/2-\pi^2\epsilon^2(d^2-d)/2\right)$$
If $\epsilon < \sqrt{d}/\pi^2$, which is implied by $\epsilon < 0.1$, then the term $\pi^2\epsilon^2(d^2-d)$ is smaller than $\epsilon\sqrt{d}(d^2+d)/2$, so we can combine the two. We can also bound $d^2+d < 2d^2$.
$$L(\psi) \ge \frac{1}{d^{d^2}}\left(1-2 \epsilon\sqrt{d}(d^2+d)/2\right) = \frac{1-2\epsilon d^{5/2}}{d^{d^2}}$$
$\square$
\end{proof}

Together, these two lemmas establish a form of concentration: points close to $B_0$ have large (lower-bounded) values of $I_V$, and points far from $B_0$ have small (upper-bounded) values of $I_V$.

\subsection{Restricting to neighborhoods of $G$}
Now we consider the effect of clause sets. A clause $\mathsf{C}$ is defined by a triple of integers $(\mathsf{C}_{1}, \mathsf{C}_{2}, \mathsf{C}_{3})$. A point $b \in B_0$ with coordinates $(b_1, b_2, \dots b_d)$, each $b_k = \pm e^{i\Theta}$, is ``good" for the clause $\mathsf{C}$ if $\{b_{\mathsf{C}_1}, b_{\mathsf{C}_{2}}, b_{\mathsf{C}_{3}}\}$ are not all equal. A point in $B_0$ is ``good" for a set of clauses if it is good for each of them, and a point is ``bad" if it is not good. Each clause $\mathsf{C}$ has an associated set of three clause vectors
$$\mathbf{v}_1 = \frac{\basis{\mathsf{C}_1} +\basis{\mathsf{C}_2}-2\basis{\mathsf{C}_3}}{\sqrt{6}}$$
$$\mathbf{v}_2 = \frac{\basis{\mathsf{C}_1}-2\basis{\mathsf{C}_2}+\basis{\mathsf{C}_3}}{\sqrt{6}}$$
$$\mathbf{v}_3 = \frac{-2\basis{\mathsf{C}_1}+\basis{\mathsf{C}_2}+\basis{\mathsf{C}_3}}{\sqrt{6}}.$$

\begin{lemma}\label{Glemma}
Take a clause $\mathsf{C} = (\mathsf{C}_{1}, \mathsf{C}_{2}, \mathsf{C}_{3})$ and let $V$ be its three clause vectors. Nowhere does $I_V$ exceed 1. At any point $\mathbf{x}$ within a distance $\epsilon$ of a good point, $I_V(\mathbf{x}) \ge \frac{32}{27d^3}\left(1-12\epsilon\sqrt{d}\right)$. At any point $\mathbf{x}$ within a distance $\epsilon$ of a bad point, $I_V(\mathbf{x}) \le \frac{4096}{27}\epsilon^6$.
\end{lemma}
\begin{proof}
To see that 1 is an upped bound on $I_V$, note that $I_V$ is a product of dot products of unit vectors, each of which is at most 1, so that $I_V \le 1$.

\par For the second claim, we have a point $\mathbf{x}$ close to a good point $\mathbf{g}$. Since we only care about the value of $I_V$ and the distance between $|\mathbf{x}-\mathbf{g}|$, we may adjust the phase of $\mathbf{x}$ and $\mathbf{g}$ jointly so that $\mathbf{g}$ is entirely real, and all of its entries are $\pm 1$. We decompose $\mathbf{x}$ in the form
$$\mathbf{x} = \alpha \basis{\mathsf{C}_1} + \beta \basis{\mathsf{C}_2} + \gamma \basis{\mathsf{C}_3} + \Delta \mathbf{x}_{\perp}$$
Then the impact of the three clause vectors is,
$$I_V(\mathbf{x}) = \frac{1}{6^3} |\alpha+\beta-2\gamma|^2 \cdot |\alpha-2\beta+\gamma|^2 \cdot |-2\alpha+\beta+\gamma|^2 $$
We seek to bound this value in the vicinity of good points. A good $B_0$ point has not all signs equal. Since we can permute the elements of $\mathsf{C}$ without affecting the value of $I_V$, a general good point $\mathbf{g}$ can be written as
$$\mathbf{g} = \frac{1}{\sqrt d}\left( -\basis{\mathsf{C}_1} +\basis{\mathsf{C}_2}+\basis{\mathsf{C}_3} + \sqrt{d-3}\,\mathbf{g}_{\perp}\right)$$
where $\mathbf{g}_{\perp}$ contains the support on all the other basis vectors. It has $I_V(\mathbf{g}) = \frac{32}{27d^3}$, by direct computation. Then for our other point $\mathbf{x}$ within a distance $\epsilon$ of $\mathbf{g}$, each coordinate must also be within $\epsilon$ of the corresponding coordinate in $\mathbf{g}$. So
$$\Re[\alpha + \beta - 2\gamma] \le \frac{1}{\sqrt{d}}\left((-1+\epsilon \sqrt{d}) + (1+\epsilon \sqrt{d}) - 2(1-\epsilon \sqrt{d})\right) = -2(1-2\epsilon \sqrt{d})/\sqrt{d}$$
and similarly
$$\Re[-2\alpha + \beta +\gamma] \ge \frac{1}{\sqrt{d}}\left(-2(-1+\epsilon \sqrt{d}) + (1-\epsilon \sqrt{d}) + (1-\epsilon \sqrt{d})\right) = 4(1-\epsilon \sqrt{d})/\sqrt{d} \ge 4(1-2\epsilon\sqrt{d})/\sqrt{d}.$$
Putting together the six factors,
\begin{align}
I_V(\mathbf{x}) &= \frac{1}{6^3} |\alpha+\beta-2\gamma|^2 \cdot |\alpha-2\beta+\gamma|^2 \cdot |-2\alpha+\beta+\gamma|^2\\
& \ge \frac{1}{6^3} \Re[\alpha+\beta-2\gamma]^2 \Re[\alpha-2\beta+\gamma]^2 \Re[-2\alpha+\beta+\gamma]^2\\
&\ge \frac{1}{6^3}\frac{32}{27d^3}\times \left(1-2\epsilon\sqrt{d}\right)^6\\
&\ge \frac{1}{6^3}\frac{32}{27d^3}\times \left(1-12\epsilon\sqrt{d}\right)
\end{align}
which is the second claim. For the third claim, take a bad point $\mathbf{h}$ in $B_0$, for which we can correct the phase to put it in the form
$$\mathbf{h} = \frac{1}{\sqrt d}\left( +\basis{\mathsf{C}_1} +\basis{\mathsf{C}_2}+\basis{\mathsf{C}_3} + \sqrt{d-3}\,\mathbf{h}_{\perp}\right)$$
Then for a nearby point only $\epsilon$ away, each coordinate is at most $\epsilon$ away. This means
$$\Re[\alpha + \beta - 2\gamma] \le \left(\frac{1}{\sqrt{d}}+\epsilon\right)+\left(\frac{1}{\sqrt{d}}+\epsilon\right)+\left(\frac{-2}{\sqrt{d}}+2\epsilon\right) = 4\epsilon$$
$$\Im[\alpha + \beta - 2\gamma] \le 4\epsilon$$
$$\implies |\alpha+\beta-2\gamma|^2 \le 32\epsilon^2$$
and similarly for the other two permutations, so that
$$I_V(\mathbf{x}) \le \frac{1}{6^3}(32\epsilon^2)^3 = \frac{4096}{27}\epsilon^6.$$
$\square$
\end{proof}

\subsection{$F = \int_x I_V(\mathbf{x})$ Approximates \#NAE3SAT}
With these bounds, we will be able to relate the number of solutions to a NAE3SAT instance to the integral $F = \int_x I_V(\mathbf{x})$.
\begin{theorem}\label{thm:FappNAE}
Given an instance of NAE3SAT with $d$ variables and $k$ clauses, let the set of vectors $V$ be given by $K_1 = 1600d^7 \ln^2(d)$ copies of basic vectors ($Z$ and $Y$ vectors), together $K_2 = d^2 \ln(d)$ copies of the clause vectors for each clause. For sufficiently large $d$, there is a function $p(n,k)$ such that, if there is at least one solution to the NAE3SAT, $F = \int_x I_V(\mathbf{x}) \ge pd^{-22d}$, and if there are no solutions, $F \le pd^{-d^2}$.
\end{theorem}
\begin{proof}
The theorem will hold if we take $p$ as the value of $I_V$ at a good point, or
$$p = d^{-K_1 d^2} \left(\frac{32}{27d^3}\right)^{K_2}.$$
If the original NAE3SAT instance has a satisfying assignment $(1,0,0,1,\dots)$, there is a corresponding good point
$$\mathbf{g} = \frac{1}{\sqrt{d}}\left(+\basis 1-\basis 2-\basis 3+\basis 4\dots\right)$$
with a large value of $I_V(\mathbf{g})$. Each set of basic vectors introduces a factor of $1/d^{d^2}$ in $I$, and each set of clause vectors introduces a factor of $32/27d^3$. Thus
$$I_V(\mathbf{g}) = d^{-K_1 d^2} \left(\frac{32}{27d^3}\right)^{K_2} = p$$
\par Further, we want to show that around this good point $\mathbf{g}$, there is an appreciable volume with large $I_V$, that will contribute substantially to $F$. Around each good point, take the ball of radius
$$\epsilon_g = \frac{1}{3200d^9(1+d)}.$$
Then by Lemma \ref{inUlemma}, each set of basic observations gives a factor in $I$ of at least
$$I_1 \ge \frac{1-2\epsilon_gd^{5/2}}{d^{d^2}}$$
and by Lemma \ref{Glemma}, each set of clause observations gives a factor at least
$$I_2 \ge \frac{32}{27d^3}(1-12\epsilon_g\sqrt{d})$$
so that the final $I_V$ value of each point in the ball is at least
\begin{align}
I_0 & = I_1^{K_1} I_2^{K_2} \ge p(1-2\epsilon_gd^{5/2})^{K_1}(1-12\epsilon_g\sqrt{d})^{K_2}\\
 &\ge p(1-2\epsilon_gK_1d^{5/2})(1-12K_2\epsilon_g\sqrt{d})\\
 &= p\left(1-2\frac{1}{3200d^9(1+d)}(1600d^7\ln^2(d))d^{5/2}\right)\left(1-12(d^2\ln(d))\frac{1}{3200d^9(1+d)}\sqrt{d}\right)\\
 & \ge p\left(1-\frac{\ln^2 d}{\sqrt d}\right)
\end{align}
This means the total contributed to $F$ by the ball around this good point is then at least $p(1-\ln d/\sqrt d)$ times the volume of this ball around $\mathbf{g}$. The ball is not actually a sphere in $\mathbb{R}^{2d}$, as it lies on the manifold of normalized states, which is curved; it's the intersection of a ball centered at $\mathbf{g}$ and the unit sphere. But since $\epsilon_g < 1/2$, this deformation reduces the volume by less than a factor of 1/2, and then we can use the standard volume of the ball. So the volume obeys
$$\textrm{Vol} \ge \frac{1}{2}\cdot\frac{2(d-1)!(4\pi)^{(d-1)}}{(2d-1)!}\epsilon_g^{2d-1}$$
and a single good point contributes a total likelihood to $p_{norm}$ at least
$$\textrm{Vol}\cdot I_0 \ge pc_1c_2^{-d}d^9d^{-21d}$$
for some particular constants $c_1, c_2 > 1$; the $d^{-21d}$ term clearly dominates the scale for large $d$. For sufficiently large $d$ then we can write
$$F \ge \textrm{Vol}\cdot I_0 \ge pd^{-22d}$$
which establishes the first claim. The second claim concerns when there are no good points. Suppose for contradiction that there is some point $\mathbf{x}$ (not necessarily in $B_0$) so that $I_V(\mathbf{x}) > p/d^{d^2}$. Applying Lemma \ref{Ulemma}, we know that it must have $\epsilon_\alpha = |\vec\alpha - \vec 1| < 0.1/d^2$, otherwise it would have at most
\begin{align}
I_V(\mathbf{x}) \le \left(d^{-d^2}(1-0.1^2/4d^5)\right)^{K_1}
&< d^{-K_1d^2}\exp(-K_1/400d^5)\\
&= d^{-K_1d^2}\exp(-4d^2\ln^2 d) \\
&< d^{-K_1d^2}\exp\left(-4d^2\ln^2 d+d^2\ln d\ln(32/27)\right)\\ 
&= d^{-K_1d^2}\exp\left(-d^2\ln^2 d+d^2\ln d\ln(32/27d^3)\right)\\
&= d^{-K_1d^2}\exp\left(-d^2\ln^2 d+\ln\left(\left(\frac{32}{27d^3}\right)^{K_2}\right)\right)\\
&= d^{-K_1d^2}\left(\frac{32}{27d^3}\right)^{K_2}/{d^{d^2\ln d}} \, = p/d^{d^2 \ln d}\\
&< p/d^{d^2}
\end{align}
Since $\epsilon_\alpha \le 1/2$, we can also apply the second part of Lemma \ref{Ulemma} and check that the all phases $|\theta_i| < 0.1/d$, otherwise our point would have $I_V$ at most
\begin{align}
\left(d^{-d^2}(1-3\theta_i^2)\right)^{K_1} &< \left(d^{-d^2}(1-0.03/d^2)\right)^{K_1}\\
&< \left(d^{-d^2}(1-0.1^2/4d^5)\right)^{K_1}\\
&< p/d^{d^2}
\end{align}
Since the amplitudes are all within $\epsilon_a$ of $1/\sqrt{d}$, and the phases are all within $0.1/d$ of $0$, the point's distance to the nearest point $b$ in $B_0$ is at most
$$\textrm{dist}_{B_0} \le \sqrt{d}\left(\epsilon_a + \left(\frac{1}{\sqrt{d}}+\epsilon_a\right)\left((1-\cos(\theta_i))^2+\sin^2(\theta_i)\right)\right)$$
$$\le \sqrt{d}\left(\frac{0.1}{d^2} + \left(\frac{1}{\sqrt{d}}+\frac{0.1}{d^2}\right)\left(2-2\cos(0.1/d)\right)\right)\le \sqrt{d}\left(\frac{0.1}{d^2} + \frac{2}{\sqrt{d}}\left(0.1/d\right)^2\right)$$
$$\le \frac{0.11}{d^{3/2}}$$
If that point $b$ is bad, then by Lemma \ref{Glemma} our point would have $I_V$ at most
$$d^{-K_1d^2}\left(\frac{4096}{27}\left(\frac{0.11}{d^{3/2}}\right)^6\right)^{K_2} = d^{-K_1d^2}\left(\frac{0.00023}{d^{9}}\right)^{K_2} = d^{-K_1d^2}\left(\frac{32}{27d^3}\right)^{K_2} \left(\frac{0.00023}{(32/27)d^6}\right)^{K_2} $$
$$< p \times 0.0002^{K_2} \le p / 5000^{d^2 \ln d} = p / (5000d)^{d^2} < p/d^{d^2}.$$
We've shown that all points have $I_V \le p/d^{d^2}$. The volume of integration is $S_{2n-1} < 1$, so the total integral $F$ is less than $p/d^{d^2}$.
$\square$
\end{proof}

\subsection{NP Hardness}
We can now prove our main result.
\begin{theorem}\label{thm:npApproxPerm}
For any constant $C<1$, it is \textsf{NP}-Hard to approximate the permanent of an $n\times n$ Hermitian positive semidefinite matrix within a factor of $2^{n^C}$.
\end{theorem}
\begin{proof}
We can reduce from NAE3SAT. Given an NAE3SAT instance on $d$ variables, we can use the set of vectors $V$ described in Theorem \ref{thm:FappNAE} and examine the resulting value $F$. As we have $O(d^9)$ vectors in $V$, the quantity $F$ can be represented as a permanent of a matrix of size $O(d^9)$. The NAE3SAT instance is satisfiable if $F \ge pd^{-22d}$ and unsatisfiable if $F \le pd^{-d^2}$, which can be distinguished if approximating within a factor of $d^{d^2-22d} = O(d^{d^2})$, and so $O(2^{d^2})$ will suffice. If we had an oracle that could approximate permanents of size $n$ PSD matrices within a factor of $2^{n^C}$ for some $C<1$, then we could do the replica trick: take the matrix corresponding to $F$, and repeat it $M = d^{(2-9C)/(1-C)}$ many times along the diagonal. The result is a matrix of size $M d^9$, which is then approximated within a factor of $2^{(Md^9)^C}$. The resulting matrix size $Md^9$ is still $poly(d)$ for any fixed $C$. Then we raise this approximate answer to the power $1/M$ to recover an approximation to the original permanent, and it has error
$$\left(2^{(Md^9)^C}\right)^M = 2^{d^{9C}M^{C-1}} = 2^{d^{9C}d^{2-9C}} = 2^{d^2}$$
which is sufficient to distinguish between satisfiable and unsatisfiable instances. As \textsf{NAE3SAT} is \textsf{NP} hard, so is approximating HPSD permanents with this accuracy.
\end{proof}

This result is complementary to one of Anari et al\cite{Anari2017}, where they show that one {\em can} approximate within a factor of $\exp((1+\gamma+o(1))n)$ where $\gamma$ is the Euler-Mascheroni constant, while we showed that permanents cannot be approximated with subexponential error. Our hard instances circumvent the fast approximation schemes of \cite{barvinok2020remark} and \cite{Chakhmakhchyan2017}, which both have requirements on the spectrum of the matrix, and perform more favorably when $\lambda_{max}/\lambda_{min}$ is smaller. Our instances are of low rank (only rank $d$, which is much larger than the matrix size $n$) so that $\lambda_{min} = 0$.
\par Finally, we conjecture that the reduction above is approximation preserving: that each good point contributes an equal amount of likelihood that can easily be estimated beforehand. Showing this would require tighter error bounds.
\begin{conjecture}
With an appropriate choice of polynomial-scaling $K_1$ and $K_2$, the construction used in Theorem \ref{thm:FappNAE} is an approximation-preserving reduction from \#NAE3SAT to HPSD permanents, such that approximating HPSD Permanents within a factor $C$ is as hard as approximating \#NAE3SAT (or \#3SAT) within a factor $C$.
\end{conjecture}
It is known that by Stockmeyer counting\cite{Grier18,Keshari2015,Stockmeyer83} computing multiplicative approximations to PSD permanents is contained in $\textsf{FBPP}^\textsf{NP}$, and if it is indeed as hard as approximating \#3SAT, it seems unlikely to be much easier than this.

\subsection{Real Matrices}\label{sec:realMats}
The arguments above all involve complex vectors, complex matrices, and integrals over the complex unit sphere. The arguments however can easily be adapted to show that PSD permanents remain hard even for purely real matrices. We could have proved the results only for the real case and this would of course imply hardness for the more general complex case, but the proof for the real case was less symmetric, asthetic, or inuitive than the complex case, which is why we delayed to this section.
\begin{theorem}\label{thm:npApproxPermReal}
For any constant $C<1$, it is \textsf{NP}-Hard to approximate the permanent of an $n\times n$ real positive semidefinite matrix within a factor of $2^{n^C}$.
\end{theorem}
\begin{proof}
The construction proceeds very similarly to above, by reducing from NAE3SAT. However, we now use one dimension more in the space: a $d$-variable NAE3SAT problem is mapped to a $(d+1)$-dimensional spherical integral $\int I(\vec x)$. The clauses are mapped, as before, with $K_2$ many sets of clause vectors, connecting the variables $1$ through $d$ in the original problem with dimensions $1$ through $d$ in the spherical integral $I(x)$. The ``basic sets" still include $K_1$ many instances of the unit vectors $\basis k$ in each basis direction $k \in [d+1]$, what we previously referred to as the $Z$ vectors.
\par The $Y$ vectors were, in preivous proofs, of the form $\frac{\basis{j} \pm i\basis{k}}{\sqrt{2}}$, for $j\neq k$. This was the sole source of complex terms in our vectors, and the reasons the resulting matrices were complex. Instead now we use four copies of each of $\frac{\basis{j} \pm \basis{d+1}}{\sqrt{2}}$. These each softly enforce the constraint that the component of $\vec x$ in the $j$ direction and the $d+1$ direction have relative phase $\pm i$ (that is, $\pm \sqrt{-1}$). Since each $j$ has $\pm i$ relative to $d+1$, this implies that each $j\neq k$ have relative phase $\pm 1$.
\par To make this quantitative and precise, we refer to the proof of Lemma \ref{Ulemma}. The bound of $1-\frac{\epsilon_\alpha^2}{4d}$ applies as before, since the $\basis k$ vectors occur just as before. As proved in Lemma \ref{Ulemma}, if $\theta_j$ and $\theta_{d+1}$ differ by a phase (up to $\pm 1$) of $\Delta\!\theta_j = \theta_j - \theta_{d+1}$, then the likelihood $I(x)$ is reduced by a factor of $1 - 3 \Delta\!\theta_j^2$; since we use each vector eight times, this becomes $(1-3\Delta\!\theta_j^2)^8$. Then for two $j\neq k$, $j,k \le d$, the likelihood is at most
$$(1-3\Delta\!\theta_j^2)^4(1-3\Delta\!\theta_k^2)^4 \le \left(1-3\left(\frac{|\Delta\!\theta_j|+|\Delta\!\theta_k|}{2}\right)^2\right)^{8} \le \left(1-3\left(\frac{|\theta_j - \theta_k|}{2}\right)^2\right)^{8} \le 1 - 3(\theta_j - \theta_k)^2$$
which gives us the same bound on the relative phases as before, so that an analogous statement to Lemma \ref{Ulemma} for our new basis set. The proof of Lemma \ref{inUlemma} holds with few modifications: in the proof above, the $Y$ terms 
$$\prod_{j\le k} \alpha_j^2 + \alpha_k^2 + 2 \alpha_j \alpha_k \cos(2\pi (\theta_j - \theta_k))$$
lead to the a penalty
$$\prod_{j\le k} (1-\epsilon \sqrt{d})^2 + (1-\epsilon \sqrt{d})^2 + 2 (1-\epsilon \sqrt{d})^2 \left(1 - \frac{(2\pi\epsilon)^2}{2}\right) = \left((1-\epsilon \sqrt{d})\left(4 - 4\pi^2\epsilon^2\right)\right)^{(d^2-d)/2}$$
$$ \ge 1-(\epsilon\sqrt{d}+\pi^2\epsilon^2)\frac{d^2-d}{2}.$$
Here instead we have four copies of each phase constraints, but only between $j \le d$ and $d+1$. So the penalty from
$$\prod_{j\le d} \Big(\alpha_j^2 + \alpha_{d+1}^2 + 2 \alpha_j \alpha_{d+1} \cos(2\pi (\theta_j - \theta_{d+1}))\Big)^4$$
becomes
$$\prod_{j\le d} \left((1-\epsilon \sqrt{d})^2 + (1-\epsilon \sqrt{d})^2 + 2 (1-\epsilon \sqrt{d})^2 \left(1 - \frac{(2\pi\epsilon)^2}{2}\right)\right)^4 = \left((1-\epsilon \sqrt{d})\left(4 - 4\pi^2\epsilon^2\right)\right)^{4d}$$
$$ \ge 1-(\epsilon\sqrt{d}+\pi^2\epsilon^2)(4d) \ge 1-(\epsilon\sqrt{d}+\pi^2\epsilon^2)\frac{d^2-d}{2}$$
as before, as long as $d\ge 9$. The resulting conclusion of the lemma that the relative value $I_V(\mathbf x)/I_V(B_0) \ge 1 - 2\epsilon d^{5/2}$ thus still holds.
\par Finally, Lemma \ref{Glemma} remains umodified in this setting, as the form of the clause vectors is unchanged. As all the necessary lemmas hold as before, and the proofs of Theorems \ref{thm:FappNAE} and \ref{thm:npApproxPerm} only care about relative values, they will all hold in the real-valued PSD setting. $\square$
\end{proof}

\section{Quantum State Tomography}
The author initially found the above construction while investigating the worst-case hardness of quantum state tomography, and the hardness implies that several problems in the context of tomography are \textsf{NP}-hard as well.
\par Quantum State Tomography (QST) is the procedure of estimating an unknown quantum state from a set of measurements on an identically prepared ensemble. The procedure can encompass both the choosing of measurement bases as well as estimating the resulting state from the measurements; in adaptive settings, the running estimate is also used to inform future measurement choices\cite{Huszar2012,Quek2021}. We focus on the latter task, of building an estimate of the state. We look at four related forms of what ``estimation" can qualify as:
\begin{enumerate}
\item Finding the Maximum Likelihood Estimator (MLE): the pure state $\rho$ most likely to produce the observations.
\item Finding the Bayesian expected state $\rho_{Avg}$: assuming a prior over the possible pure states, finding the mixed state presenting the mixture of appropriately weighted possible states.
\item Computing the expectation value of some future observation(s).
\item Finding the probability that the unknown state is in fact some particular $\rho_0$. (As there are infinitely many different pure states, we are actually asking for the probability {\em density} at $\rho_0$.)
\end{enumerate}
The first three estimations problems have all been extensively studied with various heuristics. MLE can be attempted by linear inversion\cite{Qi2013,DAuria2009}, iterative search\cite{Lvovsky2004,Rehacek2001}, or even neural networks\cite{Torlai2018}. Bayesian estimation can be accomplished by direct numerical integration\cite{BlumeKohout2010} or particle based sampling\cite{Huszar2012}, possibly with neural networks guiding the particles\cite{Quek2021}. Directly estimating future samples has also been attempted with neural networks\cite{Smith2021} or classical shadows\cite{Shadows1,Shadows2,Shadows3}. The author is not aware of any prior work on computing estimation problem 4.
\par We can show that estimation problems 2, 3, and 4 are essentially as hard as approximating PSD permanents, and that task 1 is also \textsf{NP}-hard. The exponential difficulty (assuming ETH\cite{Impagliazzo1999}) is in fact in the dimension $d$ of the underlying Hilbert space. Many questions in quantum information appear to be ``exponentially" hard, in the sense that it is hard to analyze a system $q$ qubits faster than $O(2^k)$. But here $d=2^q$, so that even when the number of qubits is a logarithmically small $q=\log(d)$, the problem of state estimation remains exponentially hard.

\subsection{Outline of Tomography Results}
Of the four forms above, we focus first on estimation problem 4. Although it is likely the question least relevant to experiment, it is the easiest to manipulate algebraically. We call it \textsc{Quantum-Bayesian-Update}, or simply \textsc{QBU}, and define it in section 5.2. In section 5.3, we give an exponential time algorithm for \textsc{QBU}, showing that it is at least possible. In section 5.4 we show that estmation problems 2, 3, and 4 are equivalent. In section 5.5 we explain \textsc{QBU}'s connection to HPSD permanents, and show it is \textsf{NP}-Hard to  approximate within subexponential error. In section 5.6 we show how the construction of difficult PSD permanents can also be modified shows that the MLE problem (estimation problem 1) is also \textsf{NP}-hard to approximate: it is NP-hard to check the existence of a state with likelihood within a subexponential factor.

\subsection{Quantum Bayesian Update}
We define the \textsc{QBU} problem as follows: given a series of observations $\mathcal{O}_i$ each taken from a copy of $\rho$, and a guess $\rho_0$, what is the probability density that $\rho = \rho_0$? The actual probability of equality is zero -- unless we have some other powerful information about the state -- which is why we ask for the probability density in the space of candidate density matrices.

Bayes' theorem lets us compute the probability density of a true state $\rho$ in terms of the likelihood of the observations $P(\mathcal{O}|\rho)$, a prior belief distribution $P(\rho)$, and the total probability of the sequence of observations $P(\mathcal{O})$. It reads,
$$P(\rho_0|\mathcal{O}) = \frac{P(\mathcal{O}|\rho_0)P(\rho_0)}{P(\mathcal{O})}$$
In order for the equation to be meaningful and not identically zero on both sides, we can read $\rho$ as representing a small volume in the space of density matrices. While there are many natural priors on the space of density matrices, we focus on the case where we know the unknown state $\rho$ is pure. This models, for instance, where we are trying to identify the output of a unitary quantum channel. The most natural prior is then the uniform distribution over all pure states, given by the Haar measure. Then all $P(E)$ are equal. The likelihood of a given observation $\mathcal{O}_i$ is simply $\Tr[\mathcal{O}_i\rho]$, so our goal is to compute

$$P(\rho|\mathcal{O}) = \frac{\prod_{i\in [n]} \Tr[\mathcal{O}_i \rho]}{P(\mathcal{O})}$$

In general $\mathcal{O}_i$ could be operators of any rank, and could belong to POVMs. For hardness, it will suffice it consider only observations with rank 1 and trace 1, but for now we allow them to be general. For any particular $\rho$ and sequence $\mathcal{O}_i$, the likelihood $\prod \Tr[\mathcal{O}_i \rho]$ can be evaluated directly in $O(nd^2)$ operations. The difficulty then lies in the normalizing factor,

$$p_{norm} = P(\mathcal{O})$$
$$\textrm{so that}$$
$$P(\rho|\mathcal{O}) = p_{norm}^{-1}\prod_{i\in [n]} \Tr[O_i \rho]$$

This indicates the probability of an entire sequence of observations. While a single observation has the simple form of $P(\mathcal{O}_i) = \Tr[\mathcal{O}_i]$, the expression rapidly becomes more complicated as we consider sequences of observations.
\par A brief example is useful for understanding what $p_{norm}$ represents. Suppose that we measure a qubit 1000 times along each of the X, Y, and Z axes: we expect to see a particular amount of bias. Observing 1000 results each of +X, +Y, and +Z would be very unlikely, as the qubit cannot be in the +1 eigenstate of all three axes at once. It would be similarly surprising to see exactly 500 counts each of +X, -X, +Y, -Y, +Z, and -Z: this state shows no tendency of a particular orientation, but a pure qubit state must show a bias towards some orientation. This would have a small value of $p_{norm}$, as there is no good state to explain the sequence observed. A sequence of 1000 +Z observations, and 500 each of +X, -X, +Y, and -Y is much more likely, as it can be well explained by the $\ket{\uparrow}$ state, and so has a larger value of $p_{norm}$.
\par As we just saw, computing $P_{density}(\rho = \rho_0|\mathcal{O})$ is easy if $p_{norm}$ is known, and conversely $p_{norm}$ can be easily computed from the probability density. $p_{norm}$ is a more attractive goal for our problem, as it doesn't depend on $\rho_0$. It can be computed by summing up all unnormalized probabilities:

$$p_{norm} = \int_{\vec x \in \mathbb{C}^d_{1}} \prod_{i\in [n]} \Tr[\mathcal{O}_i xx^\dagger]\,dx$$

where the integral is over the Hilbert space $\mathbb{C}^d$ restricted to length-1 vectors.  This leads to the definition,
\begin{definition}[Quantum-Bayesian-Update]
Given a collection of observations $\mathcal{O} = (\mathcal{O}_1,\dots \mathcal{O}_n)$ in a Hilbert space of dimension $d$, compute
\begin{equation}\label{eqn:pnormDef}
p_{norm} = \frac{\int_{\vec x \in \mathbb{C}^d_{1}} \prod_{i\in [n]} \Tr[\mathcal{O}_i xx^\dagger]\,dx}{\Tr[\mathcal{O}_i]}
\end{equation}
\end{definition}

\subsection{Polynomial time QBU for fixed $d$}\label{sec:fixDAlgo}
This space of state vectors $\mathbb{C}_1^d$ has the geometry of a real $(2d-1)$-sphere, and the entries of $\rho$ are quadratic in the Cartesian coordinates for this sphere. Thus, $p_{norm}$ becomes a integral over a $(2d-1)$-sphere of a homogeneous $2n$ degree polynomial in the $2d$ variables. The expansion of the polynomial into monomials takes $O((2n)^{2d})$ time, and each monomial can then be immediately integrated over the sphere using the formula\cite{Folland2001}
\begin{equation}\label{sphereInt}
\int_{S^k} x_1^{\alpha_1}x_1^{\alpha_2}\dots x_k^{\alpha_k} = \begin{cases}
0 & \textrm{if any $\alpha_i$ are odd}\\
\frac{2\prod_i\Gamma(\frac{1}{2}(\alpha_i+1))}{\Gamma(\sum_i \frac{1}{2}(\alpha_i+1))} & \textrm{if all $\alpha_i$ are even}\\
\end{cases}
\end{equation}
where $\Gamma$ is gamma function, $\Gamma(\frac{1}{2}(\alpha+1)) = \sqrt{\pi 2^\alpha}(\alpha-1)!!$. This gives a polynomial time algorithm for evaluating $p_{norm}$ when $d$ is fixed.

\subsection{Relationship between estimation problems}
Since QBU is not of particular interest to actual tomography tasks, we show it is equivalent (under polynomial many-one reductions) to the more realistic tasks 2 and 3 above, of estimating observables or the state itself. We can show that these are just as difficult (or, just as easy) as the Bayesian update step.

\subsubsection{Computing $\rho_{Avg}$}
Given that there will always be room for uncertainty, we cannot meaningfully as for a single pure state as an answer, but we can ask for $\rho_{Avg}$: the mixed state representing the correctly updated mixture over all the possible true states, given by $\int P(\rho) \rho\,d\rho$. The impure $\rho_{Avg}$ reflects the expectation of all observables given our current information.
\par We parameterize the space of density matrices by a single vector $\psi \in S^{2d-1}$, and given some completed observations $\mathcal{O}$, the Bayesian expected state is
$$\rho_{Avg} = \int_{\psi \in S^{2d-1}} P\Big(\ket{\psi}\bra{\psi}\Big|\mathcal{O}\Big) \ket{\psi}\bra{\psi}\,d\psi $$
$$ = \int_{\psi \in S^{2d-1}} p_{norm}^{-1}\Big(\ket{\psi}\bra{\psi}\Big)\prod_{O\in \mathcal{O}} \braket{\psi|O|\psi} \,d\psi$$
whose individual matrix elements are
$$\braket{i|\rho_{Avg}|j} = p_{norm}^{-1}\int_{\psi \in S^{2d-1}} \braket{i|\psi}\braket{\psi|j}\prod_{O\in \mathcal{O}} \braket{\psi|O|\psi} \,d\psi$$
We have already discussed computing $p_{norm}$, as a spherical integral of a polynomial. For any given $i$ and $j$, the remaining integral is also a spherical integral of a polynomial, and can be computed in the same fashion. In fact we can re-use the results from the large product excluding the $i$ and $j$, and so $\rho_{Avg}$ can be recovered in $O(n^d)$ time.
\par On the other hand, a diagonal element $\braket{i|\rho_{avg}|i}$ gives
$$\braket{i|\psi}\braket{\psi|i}\prod_{O\in \mathcal{O}} \braket{\psi|O|\psi} = \bra{\psi}\Big(\ket{i}\bra{i}\Big)\ket{\psi}\prod_{O\in \mathcal{O}} \braket{\psi|O|\psi} = \prod_{O\in (\mathcal{O} \cup \{\ket{i}\bra{i}\})} \braket{\psi|O|\psi}$$
which is the same integrand as for $p_{norm}$, only with one additional observation $\ket{i}\bra{i}$ added.
\par If we had an algorithm compute $\rho_{Avg}$ efficiently, we could use it to solve the Bayesian update problem on a set of observations $\mathcal{O}$, by discarding the last observation $O_{last}$, computing $\rho_{Avg}$, decompose $O_{last}$ into a scaled sum of projectors $\sum_i \lambda_i \ket{i}\bra{i}$, and then evaluate the sum of matrix elements $\sum_i \lambda_i \bra{i}\rho_{Avg}\ket{i}$. This shows that state estimation is at least as hard as Bayesian updating.

\subsubsection{Computing observable expectations}
We could try to only find the expectation of a particular observable $A$, and not the whole state $\rho_{Avg}$, conditioned on our observations. We can write this as $E[A|\mathcal{O}]$. This is also just as hard: density matrices as a $d^2-1$ linear space, and expectations of observables are linear in $\rho$, so by computing the exact expectation of $d^2-1$ independent obsevables, we can find $\rho_{Avg}$ exactly. This is of course precisely the idea behind least-squares quantum state estimation, and it shows that computing expectation values is as hard as $\rho_{Avg}$.
\par Finally, if we could compute a Bayesian update, we could compute the expectation values of observables. Just as before, write our desired obsevable as $A = \sum \lambda_i \ket{i}\bra{i}$, and evaluate
$$E[A|\mathcal{O}] = \sum \lambda_i E[\ket{i}\bra{i}|\mathcal{O}] = \sum \lambda_i p_{norm}^{-1} \int_{\psi \in S^{2d-1}} \prod_{O\in (\mathcal{O} \cup \{\ket{i}\bra{i}\})} \braket{\psi|O|\psi} \,d\psi$$
Computing $p_{norm}$ and each of the $d$ many spherical integrals is a Bayesian update problem. We have reductions (Bayesian update) $\to$ (Compute $\rho_{Avg}$) $\to$ (Compute $E[A|\mathcal{O}]$) $\to$ (Bayesian update), so these are equivalent in difficulty. Note that these are many-one reductions, which is unavoidable as $\rho_{Avg}$ is a matrix-valued function problem while the two are scalar-valued.

\subsection{NP-Hardness of QBU and $\rho_{Avg}$}
We now state the main hardness results on quantum tomography.
\begin{theorem}\label{thm:qbuHard}
For any $C < 1$, it is NP-hard to compute the value $p_{norm}$ for Quantum-Bayesian-Update with an approximation factor of at most $2^{n^C}$.
\end{theorem}
\begin{proof}
When $\mathcal{O}_i$ are all rank-1 operators, the numerator in Eq. \ref{eqn:pnormDef} is of the form in Theorem \ref{thm:permIsSphere}, and the denominator in Eq. \ref{eqn:pnormDef} can be efficiently computed by direct calculation. Thus any PSD permanent can be efficiently reduced to a problem of computing $p_{norm}$ with an approximation-preserving reduction, and \textsc{QBU} is \textsf{NP}-Hard to approximate to the same degree.
\end{proof}
\begin{theorem}\label{thm:stateHard}
For any $C < 1$, it is \textsf{NP}-hard to compute a diagonal matrix entry of $\rho_{Avg}$, in any basis, with an approximation factor of at most $2^{n^C}$. It is also \textsf{NP}-hard to compute the expectation of a positive semidefinite operator $\mathcal{O}$ with an approximation factor of at most $2^{n^C}$.
\end{theorem}
\begin{proof}
A diagonal element of $\rho_{Avg}$ is the expectation value of the rank-1 PSD operator projecting onto that element, so the first statement is a special case of the second. As described above, both of these quantities then also take the form of a PSD permanent, and any PSD permanent can be turned into these problem by taking the desired matrix element (in the first case) or observabe $\mathcal{O}$ (in the second case) to be the first vector $V_1^\dagger V_1$. These are also approximation preserving reductions, so these are also \textsf{NP}-hard to approximate.
\end{proof}

\subsection{NP-Completeness of Maximum Likelihood Estimation}
In the case of MLE state tomography, we are not so demanding that we require knowledge of the full average state, and we are content with just finding one good explanatory state $\ket{\psi}$. Accordingly, we do not consider a permanent $\int_x I_V(x)$ (a problem of counting solutions to 3-SAT), but just the question of maximizing $I_V(x)$ (a problem of finding a solution to 3-SAT). This allows to show that the problem is actually lies in \textsf{NP}, while this is unlikely to be true for the other problems in this paper unless $\textsf{BPP}^\textsf{NP} = \textsf{NP}$.
\par Formulating the MLE problem as a decision problem:
\begin{definition}[$C$-Approximate-Quantum-MLE]
Given a collection of observations $\mathcal{O}_i$ of an unknown quantum state $\ket{\psi}$, and a real number $p$, decide whether there is a $\ket{\psi}$ whose likelihood $L(\psi) = \prod_i \braket{\psi|\mathcal{O}_i|\psi}$ is at least $p$, or if $L(\psi) < p/C$ for all $\psi$, being promised that one of these is the case.
\end{definition}
We will show that even the approximate problem is \textsf{NP}-hard, for any $C$.
\begin{theorem}\label{thm:MLEhard}
For any $C > 1$, the $C$-Approximate-Quantum-MLE problem is \textsf{NP}-complete.
\end{theorem}
\begin{proof}
Containment in NP is straightforward, as one can supply a description of the state $\ket{\psi}$, which requires only $d$ many real numbers, and then $L(\psi)$ can be directly evaluated.
\par To show hardness, we use the same NAE3SAT construction as in Theorem \ref{thm:FappNAE}. As was shown in the proof of that theorem, any good point (thus, a solution to the underlying NAE3SAT problem) has
$$L(\psi) = I_0(x) \ge p\left(1 - \frac{\ln^2 d}{\sqrt d}\right).$$
We also show in that proof that, if there are no good points (and thus no solutions) then
$$L(\psi) = I_0(x) \le p/d^{d^2}$$
for all points. Thus, the existence of a high likelihood point even within $C < d^{d^2}$ implies the existence of a solution.
\end{proof}

\subsection{Practical Difficulty of Tomography}
Although the above results imply that several approaches to quantum state tomography may be difficult to compute exactly, these difficult instances are somewhat artificial and unlikely to occur in practice. Additionally, difficult instances such as the one constructed in the above proofs could be readily addressed in practice by the addition of measurements in e.g. the $X$ measurement basis, which would directly probe the relative signs in the state vector and allow relatively efficient readout of the state. Additionally, the constraint that we only search for pure states -- while a useful prior that could be relevant once high-fidelity quantum computer exists -- makes a highly nonconvex search space. If we relax this and take a prior with uniform measure over the space of density matrices, then the resulting likelihood function is logarithmically convex and the resulting MLE problem can be solved in polynomial time in $d$. Thus, these results should not be taken as a statement that quantum state tomography is actually exponentially hard in the Hilbert space dimension $d$. Rather, any analysis of quantum state tomography procedures will need at least one of: careful choice of measurement basis, only probabilistic guarantees on convergence, or (if doing MLE) a convex prior.

\printbibliography

\end{document}